%
\documentclass[10pt,A4paper,conference]{IEEEtran}
\usepackage{graphicx}
\usepackage{ifthen, calc, amssymb, amsmath, amsfonts}

\def\calS{{\cal S}}
\def\calX{{\cal X}}
\def\calY{{\cal Y}}
\def\calZ{{\cal Z}}

\def\bx{\textbf{{x}}}
\def\bX{\textbf{{X}}}
\def\by{\textbf{{y}}}
\def\bY{\textbf{{Y}}}
\def\bz{\textbf{{z}}}

\def\bE{\textbf{{E}}}
\def\b1{\mathbf{1}}

\begin{document}

\newtheorem{theorem}{\bf Theorem}
\newtheorem{lemma}{\bf Lemma}
\newtheorem{corollary}[theorem]{\bf Corollary}
\newtheorem{proposition}[theorem]{\bf Proposition}
\newtheorem{definition}{\bf Definition}
\newtheorem{example}{\bf Example}

\def\zproof{\noindent{\itshape Proof of }}
\def\endzproof{\hspace*{\fill}~\QED\par\endtrivlist\unskip}

\title{Markov Lemma for Countable Alphabets}


\author{\authorblockN{Siu-Wai Ho}
\authorblockA{Institute for Telecommunications Research\\
University of South Australia\\
Australia\\
Email: siuwai.ho@unisa.edu.au}
 }
%
\maketitle
\begin{abstract}
Strong typicality and the Markov lemma have been used in the proofs of several multiterminal source coding theorems. Since these two tools can be applied to finite alphabets only, the results proved by them are subject to the same limitation. Recently, a new notion of typicality, namely unified typicality, has been defined.  It can be applied to both finite or countably infinite alphabets, and it retains the asymptotic equipartition property and the structural properties of strong typicality.  In this paper, unified typicality is used to derive a version of the Markov lemma which works on both finite or countably infinite alphabets so that many results in multiterminal source coding can readily be extended. Furthermore, a simple way to verify whether some sequences are jointly typical is shown.
\end{abstract}

\section{Introduction}
The Markov lemma was first used by Berger \cite{Berger} to extend
multiterminal source coding theory.  It has been used in the
achievability part of the coding theorems in source coding with side
information \cite[Section~15.8]{bk:Cover}, rate distortion with side
information \cite[Section~15.9]{bk:Cover}, channel coding with
side information \cite[Section~6.2]{bk:Kramer}, a large class of
multiterminal noiseless source coding problems \cite{HK}, etc. The
different versions of the Markov lemma given in
\cite{Berger}--\cite{HK} have the same limitation that all of them
cannot be applied to countably infinite alphabets because they are
based on strong typicality \cite{Berger}\cite{bk:Raymond}. Note
that the Markov lemma for Gaussian sources has been shown in
\cite{Oo}.



Recently, Ho and Yeung have defined a new notion of typical
sequences, called unified typicality, which works for countable
alphabets\footnote{Countable alphabet means an alphabet which
can be finite or countably infinite} \cite{UnifiedTypicality}.  Unified
typicality retains the asymptotic equipartition property and the
structural properties of strong typicality \cite{UnifiedTypicalityJ}. We
will further show in this paper that unified typicality can give a
version of the Markov lemma for countable alphabets, which can be
used to extend the achievability parts of the aforementioned coding
problems. Also, the new Markov lemma further supports that unified typicality is a right notion for generalizing strong
typicality to countable alphabets.

%

In order to show that some sequences are jointly weakly typical, we
need to show $2^k - 1$ nonnegative quantities in
\cite[(15.24)]{bk:Cover} sufficiently small for a problem with $k$
random variables. It seems that unified typicality suffers the same
trouble. In this paper, we will demonstrate a simple method which
requires to show only two nonnegative quantities sufficiently small in
order to show jointly unified typical.



In the next section, we introduce unified typicality and some notations. In Section~\ref{se:TheMarkovL}, the Markov lemma which works on both finite or countably infinite alphabet is shown, and its consequences are discussed. Then some useful lemmas and the trick to ease the verification of jointly unified typical sequences are shown in Section~\ref{se:somelemmas} before the new Markov lemma is proved in Section~\ref{se:proof}. In this paper, the base of the logarithm is $2$.

\section{Unified Typicality \label{se:UnifiedT}}
Consider some countable alphabets $\calX$, $\calY$ and $\calZ$.
For any sequences $\by = (y_1, \ldots, y_n) \in \calY^n$, we say
that a sequence of random variables $\textbf{X}=(X_1, X_2, ...,
X_n) \in \calX^n$ is drawn $\sim\prod_{i} p(x_i|y_i)$ if $X_i$ are
independent and
\begin{eqnarray}
\Pr\{\bX = \bx\} =
\prod_{i=1}^n p(x_i|y_i),
\end{eqnarray}
where $\bx = (x_1, \ldots, x_n) \in \calX^n$.  Let $\bz = (z_1,
\ldots, z_n) \in \calZ^n$. We call $Q_{XYZ} = \{q(xyz)\}$ the
\emph{empirical distribution} of the sequences $(\bX, \by, \bz)$,
where $q(xyz) = n^{-1}N(x, y, z; \textbf{X}, \by, \bz)$ and $N(x,
y, z; \textbf{X}, \by, \bz)$ is the number of occurrences of $(x, y,
z)$ in the sequences $(\textbf{X}, \by, \bz)$. Note that $Q_{XYZ}$
is also called the \emph{type} of $(\textbf{X}, \by, \bz)$
\cite{CsiszerandKorner} and $Q_{XYZ}$ is a random variable as
$\bX$ is random. The marginal distribution $\{q(xy)\}$ is denoted
by $Q_{XY}$ and the other marginal distributions of $Q_{XYZ}$
and $P_{XYZ} = \{p(xyz)\}$ are defined in a similar fashion. We
use $X - Y - Z$ to denote a Markov chain with respect to
$P_{XYZ}$, i.e., $p(xyz) = p(x|y)p(yz)$ for all $x$, $y$ and $z$.
Now, we use the Kullback-Leibler divergence $D( \cdot|| \cdot)$
and entropy $H(\cdot)$ (see e.g., \cite{bk:Cover}\cite{bk:Raymond}) to define unified typicality \cite{UnifiedTypicality}. We always assume $H(P_{XYZ}) < \infty$.

\medskip
\begin{definition}
\label{df:UJtypicality}  The unified jointly typical set
$U^n_{[XYZ]\gamma}$ with respect to $P_{XYZ}$ is the set of
sequences $(\textbf{x}, \textbf{y}, \bz) \in \mathcal{X}^n \times
\mathcal{Y}^n \times \mathcal{Z}^n$ such that
\begin{eqnarray}
D({Q'}_{XYZ}||P_{XYZ}) +
|H({Q'}_{XYZ})-H(P_{XYZ})| + \nonumber\\
|H({Q'}_{XY})-H(P_{XY})| +
|H({Q'}_{YZ})-H(P_{YZ})| + \nonumber\\
|H({Q'}_{XZ})-H(P_{XZ})| +
|H({Q'}_{X})-H(P_{X})| + \nonumber\\
|H({Q'}_{Y})-H(P_{Y})| +
|H({Q'}_{Z})-H(P_{Z})|\leq \gamma,
\label{eq:UJtypicality}
\end{eqnarray}
where $Q'_{XYZ} = \{q'(xyz)\}$ is the empirical distribution of
$(\bx, \by, \bz)$ with $q'(xyz) = n^{-1}N(x, y, z; \textbf{x}, \by,
\bz)$.
\end{definition}
The definition of $U^n_{[YZ]\gamma}$ is similar to
$U^n_{[XYZ]\gamma}$ with $D({Q}_{XYZ}||P_{XYZ})$ replaced
by $D({Q}_{YZ}||P_{YZ})$ and all the absolute values involving $X$
being dropped.

\section{Main Results \label{se:MainResults}}
\subsection{The Markov Lemma \label{se:TheMarkovL}}
The Markov lemma for countable alphabets is given in Theorem~\ref{th:MarkovL} and its proof will be deferred to Section~\ref{se:proof}.  In this paper, we consider only those $P_{XYZ}$ satisfying $H(P_{XYZ}) < \infty$ and
\begin{eqnarray}
\sum_x p(x|y) \left( \log p(x|y) \right)^2 < C \label{eq:bounded}
\end{eqnarray}
for $y \in \calY$, where $C$ is finite.  These assumptions enable us to simplify the proofs by using Chebyshev's inequality.

\medskip
\begin{theorem} \label{th:MarkovL}
Consider $P_{XYZ}$ with $H(P_{XYZ}) < \infty$.  Assume that \eqref{eq:bounded} is satisfied and $X - Y -  Z$. If for any $\gamma > 0$ and any given $(\by,
\bz) \in U^n_{[YZ]\eta}$, $\bX$ is drawn $\sim \prod_{i}
p(x_i|y_i)$, then
\begin{eqnarray}
\Pr\big\{(\bX, \by, \bz) \in U^n_{[XYZ]\gamma} \big\} \geq 1 -
\gamma
\end{eqnarray}
for $n$ sufficiently large and $\eta$ sufficiently small.
\end{theorem}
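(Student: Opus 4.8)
The plan is to show that with high probability, the empirical distribution $Q'_{XYZ}$ of $(\bX,\by,\bz)$ is close enough to $P_{XYZ}$ — in the strong sense captured by the eight-term sum in \eqref{eq:UJtypicality} — given that $(\by,\bz)$ is already unified typical with small parameter $\eta$. First I would observe that since $(\by,\bz) \in U^n_{[YZ]\eta}$, the $Y$-marginal, $Z$-marginal, and $YZ$-marginal of $Q'$ are already within $O(\eta)$ of those of $P$ in the relevant quantities, so those terms in the sum contribute negligibly and the work reduces to controlling the terms involving $X$: namely $D(Q'_{XYZ}\|P_{XYZ})$, and the entropy deviations $|H(Q'_{XYZ})-H(P_{XYZ})|$, $|H(Q'_{XY})-H(P_{XY})|$, $|H(Q'_{XZ})-H(P_{XZ})|$, and $|H(Q'_{X})-H(P_{X})|$. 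The key structural fact is the Markov chain $X-Y-Z$: because $\bX$ is drawn $\sim\prod_i p(x_i|y_i)$ and $P_{XYZ}$ factors as $p(x|y)p(yz)$, the ``target'' for $(\bX,\by,\bz)$ is exactly the conditional law $\prod_i p(x_i|y_i)$ with the fixed typical pair plugged in, so conditionals rather than joints are the right objects to track.

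The core estimate I would carry out is a second-moment / Chebyshev argument, which is precisely what assumption \eqref{eq:bounded} is there to enable. Consider $-\frac1n\log\big(\prod_i p(X_i|y_i)\big) = -\frac1n\sum_i \log p(X_i|y_i)$. Its expectation is $\frac1n\sum_i H(X|Y=y_i)$, which by the typicality of $\by$ converges to $H(P_{X|Y}) = H(P_{XY})-H(P_Y)$; and its variance is $\frac{1}{n^2}\sum_i \mathrm{Var}(\log p(X_i|y_i))$, which by \eqref{eq:bounded} is $O(C/n)\to 0$. Hence by Chebyshev, $-\frac1n\log\prod_i p(X_i|y_i)$ concentrates around $H(P_{X|Y})$. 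From this I would derive concentration of $D(Q'_{XY}\|P_{XY})$ (equivalently $D$ of the conditional types), then translate it, using the already-controlled $Y$-marginal, into smallness of $D(Q'_{XYZ}\|P_{XYZ})$ via the chain rule $D(Q'_{XYZ}\|P_{XYZ}) = D(Q'_{YZ}\|P_{YZ}) + D(Q'_{X|Y}\|P_{X|Y}\,|\,Q'_{Y})$ together with the Markov structure which kills cross terms. At this stage I expect to invoke a lemma from Section~\ref{se:somelemmas} (the ``trick'' the introduction advertises) that says: once $D(Q'_{XYZ}\|P_{XYZ})$ is small \emph{and} the entropy of one marginal is close, then all the intermediate entropy deviations $|H(Q'_S)-H(P_S)|$ are automatically small — this is the Pinsker-type / uniform-continuity-of-entropy argument that makes unified typicality behave like strong typicality. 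That reduces the whole eight-term sum to two quantities, matching the paper's claim.

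The main obstacle, and the reason \eqref{eq:bounded} is assumed, is that on a countably infinite alphabet entropy is \emph{not} uniformly continuous in the distribution under total variation alone, so controlling $D(Q'\|P)$ is not by itself enough to control the entropy differences — one genuinely needs the extra moment bound on $\log p(x|y)$ to prevent mass from escaping to the tail and to get a quantitative handle on $|H(Q') - H(P)|$. Concretely, the delicate step is bounding $\big|\sum_{x} q'(x|y)\log q'(x|y) - \sum_x p(x|y)\log p(x|y)\big|$ (and its $\by$-averaged version) in terms of $D(Q'_{X|Y}\|P_{X|Y}\,|\,Q'_Y)$ plus a vanishing tail term; I would handle this with the identity $H(Q') = -D(Q'\|P) - \sum q'\log p$ and then control $\sum q'\log p$ by splitting into a finite ``head'' (where $Q'\to P$ by the weak-law-type convergence above) and an infinite ``tail'' (bounded using \eqref{eq:bounded} and Chebyshev again). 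Everything else — collecting the $O(\eta)$ and $o(1)$ error terms, choosing $\eta$ small and $n$ large so the total is below $\gamma$, and absorbing the $(1-\gamma)$ probability from the single Chebyshev bound — is routine bookkeeping.
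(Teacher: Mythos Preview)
Your plan matches the paper's approach closely: Chebyshev on $\sum_i \log p(X_i|y_i)$ using \eqref{eq:bounded}, a weak-law/semicontinuity step to separate the divergence term from the conditional-entropy difference, the chain rule for divergence, and finally the ``two quantities suffice'' lemma (this is exactly Theorem~\ref{th:disconBset}) to control all eight terms of \eqref{eq:UJtypicality}. Your head/tail split is essentially what the paper packages as lower semicontinuity of conditional entropy (Lemma~\ref{lm:conditionalH}) combined with a separate variational-distance bound (Lemma~\ref{lm:strongpart_pxyz}); the paper needs both ingredients for the same reason you do, namely that the Chebyshev step only controls the \emph{sum} $D(Q'_{X|YZ}\|P_{X|YZ}\,|\,Q'_{YZ})+H(Q'_{X|YZ})-H(P_{X|YZ})$, not the two pieces individually.

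One slip you should fix: the identity you wrote,
\[
D(Q'_{XYZ}\|P_{XYZ}) = D(Q'_{YZ}\|P_{YZ}) + D(Q'_{X|Y}\|P_{X|Y}\,|\,Q'_{Y}),
\]
is false. The Markov property $X-Y-Z$ holds for $P$, not for the empirical $Q'$, so the correct second summand is $D(Q'_{X|YZ}\|P_{X|YZ}\,|\,Q'_{YZ}) = \sum_{yz} q'(yz)\, D(Q'_{X|Y=y,Z=z}\|P_{X|Y=y})$, which is in general strictly larger than $D(Q'_{X|Y}\|P_{X|Y}\,|\,Q'_{Y})$; controlling only the latter would not close the argument. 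Fortunately your Chebyshev computation, done carefully, does control the right thing: writing $n^{-1}\sum_i \log p(X_i|y_i) = \sum_{xyz} q'(xyz)\log p(x|y)$ and grouping by $(y,z)$ yields exactly the $(X\,|\,YZ)$-conditional combination (this is the paper's Lemma~\ref{lm:weakpart}), so once you track $Q'_{X|YZ}$ rather than $Q'_{X|Y}$ the proof goes through as you outlined.
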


\medskip \noindent\textbf{Remarks:}
\begin{itemize}
    \item[i)] This is a generalization of
        \cite[Lemma~15.8.1]{bk:Cover}. Since unified typicality
        retains the asymptotic equipartition property and the
        structural properties of strong typicality
        \cite{UnifiedTypicality}\cite{UnifiedTypicalityJ}, it is readily
        to generalize the achievability parts of Theorem~15.8.1 and
        Theorem~15.9.1 in \cite{bk:Cover} with $X$ and $Y$
        taking values from countable alphabets.
    \item[ii)] A result similar to \cite[(1.27)]{bk:Kramer} with
        strong typicality replaced by unified typicality can be easily
        shown from Theorem~\ref{th:MarkovL}.
    \item[iii)] Theorem~\ref{th:MarkovL} can easily generalize the
        version of the Markov lemma in \cite{Berger} to countably
        infinite alphabet as follows.
\end{itemize}

\medskip
\begin{corollary} \label{co:MarkovL2}
Consider $P_{XYZ}$ with $H(P_{XYZ}) < \infty$.  Assume that \eqref{eq:bounded} is satisfied and $X - Y -  Z$. If for any $\gamma > 0$ and any given $\bz \in U^n_{[Z]\eta}$, $(\bX, \bY)$ is generated according to
$\Pr\{(\bX, \bY) = (\bx, \by) \} = \prod_{i} p(x_iy_i)$, then
\begin{eqnarray}
\Pr\big\{(\bX, \bz) \in U^n_{[XZ]\gamma} | (\bY, \bz) \in
U^n_{[YZ]\eta} \big\} \geq 1 - \gamma,
\end{eqnarray}
for $n$ sufficiently large and $\eta$ sufficiently small.
\end{corollary}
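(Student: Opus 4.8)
The plan is to derive Corollary~\ref{co:MarkovL2} from Theorem~\ref{th:MarkovL} by conditioning on the event $\{(\bY,\bz) \in U^n_{[YZ]\eta}\}$ and applying Theorem~\ref{th:MarkovL} for each typical value of $\bY$. First I would observe that when $(\bX,\bY)$ is generated with $\Pr\{(\bX,\bY)=(\bx,\by)\} = \prod_i p(x_i y_i)$, the conditional distribution of $\bX$ given $\bY = \by$ is exactly $\prod_i p(x_i|y_i)$, since $\prod_i p(x_i y_i) = \big(\prod_i p(x_i|y_i)\big)\big(\prod_i p(y_i)\big)$. So conditioned on $\bY = \by$, the hypothesis of Theorem~\ref{th:MarkovL} is met provided $(\by,\bz)$ is itself in $U^n_{[YZ]\eta}$.

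Next I would note that if $(\by,\bz) \in U^n_{[YZ]\eta}$, then in particular the $Y$-marginal and $Z$-marginal terms in the defining inequality are each at most $\eta$, so $\bz \in U^n_{[Z]\eta}$ is consistent, and more importantly $(\by,\bz) \in U^n_{[YZ]\eta}$ is precisely the condition needed to invoke Theorem~\ref{th:MarkovL}. Thus for every fixed $\by$ with $(\by,\bz) \in U^n_{[YZ]\eta}$,
\begin{eqnarray}
\Pr\big\{(\bX,\by,\bz) \in U^n_{[XYZ]\gamma'} \mid \bY = \by\big\} \geq 1 - \gamma'
\end{eqnarray}
for $n$ large and $\eta$ small, where $\gamma'$ is chosen below. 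Since membership in $U^n_{[XYZ]\gamma'}$ forces the $XZ$-term to be at most $\gamma'$, we get $(\bX,\bz) \in U^n_{[XZ]\gamma'}$ on that event; taking $\gamma' = \gamma$ then yields the bound $\Pr\{(\bX,\bz) \in U^n_{[XZ]\gamma} \mid \bY = \by\} \geq 1-\gamma$ uniformly over all $\by$ with $(\by,\bz)\in U^n_{[YZ]\eta}$.

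Finally I would average over $\by$ conditioned on the event $\{(\bY,\bz) \in U^n_{[YZ]\eta}\}$: writing
\begin{eqnarray}
\Pr\big\{(\bX,\bz) \in U^n_{[XZ]\gamma} \mid (\bY,\bz) \in U^n_{[YZ]\eta}\big\} = \sum_{\by} \Pr\{\bY = \by \mid (\bY,\bz) \in U^n_{[YZ]\eta}\}\, \Pr\{(\bX,\bz)\in U^n_{[XZ]\gamma} \mid \bY = \by\}
\end{eqnarray}
and bounding each inner conditional probability below by $1-\gamma$ gives the claim. The only subtlety — and the step I expect to require the most care — is checking that the $n$ and $\eta$ thresholds supplied by Theorem~\ref{th:MarkovL} can be taken uniformly over the (possibly infinitely many) values $\by$ appearing in the sum, and that the conditioning event $\{(\bY,\bz)\in U^n_{[YZ]\eta}\}$ has positive probability so the conditional probabilities are well-defined; both follow because the threshold in Theorem~\ref{th:MarkovL} depends only on $P_{XYZ}$, $\gamma$, and the constant $C$ in \eqref{eq:bounded}, not on the particular typical $\by$, and because for $n$ large the AEP for unified typicality makes $\{(\bY,\bz)\in U^n_{[YZ]\eta}\}$ occur with probability close to one. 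If instead one prefers to avoid the conditioning-on-$\by$ argument entirely, an alternative is to apply Theorem~\ref{th:MarkovL} directly to the pair $(\bX,\bY)$ in the role of its ``$\bX$'' with a trivial second coordinate, but the route above is the cleaner one.
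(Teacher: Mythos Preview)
Your proposal is correct and follows essentially the same route as the paper: condition on $\bY=\by$ with $(\by,\bz)\in U^n_{[YZ]\eta}$, apply Theorem~\ref{th:MarkovL} to get $(\bX,\by,\bz)\in U^n_{[XYZ]\gamma}$ with high probability, pass to $(\bX,\bz)\in U^n_{[XZ]\gamma}$ by consistency (the paper cites \cite[Theorem~5]{UnifiedTypicality} for this step, which is exactly your observation that the $XZ$-terms are dominated by the full $XYZ$-sum together with $D(Q_{XZ}\|P_{XZ})\le D(Q_{XYZ}\|P_{XYZ})$), and then average over $\by$. One small correction to your side remark: the event $\{(\bY,\bz)\in U^n_{[YZ]\eta}\}$ having positive probability is not a consequence of the ordinary AEP, since $\bz$ is fixed and $\bY$ is i.i.d.\ from $P_Y$ rather than $(\bY,\bz)$ being i.i.d.\ from $P_{YZ}$; it follows instead from a conditional-typicality argument (itself a degenerate instance of Theorem~\ref{th:MarkovL}), though for the corollary you only need positivity, not high probability.
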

\medskip
\begin{proof}
If $(\bX, \bY, \bz) \in U^n_{[XYZ]\gamma}$, then $(\bX, \bz) \in
U^n_{[XZ]\gamma}$ from the consistency theorem in
\cite[Theorem 5]{UnifiedTypicality}. Therefore,
\begin{eqnarray}
\lefteqn{\Pr\big\{(\bX, \bz) \in U^n_{[XZ]\gamma}| (\bY, \bz) \in
U^n_{[YZ]\eta} \big\}} \\
&\geq& \Pr\big\{(\bX, \bY, \bz) \in U^n_{[XYZ]\gamma}| (\bY, \bz) \in
U^n_{[YZ]\eta} \big\}  \label{eq:MarkovLv20}\\
&=& \sum_{\by: (\by, \bz) \in U^n_{[YZ]\eta}} \Pr\{\bY = \by| (\by,
\bz) \in U^n_{[YZ]\eta}\} \cdot \nonumber\\
&& \Pr\big\{(\bX, \by, \bz) \in U^n_{[XYZ]\gamma}| (\by,
\bz) \in U^n_{[YZ]\eta} \big\} \\
&\geq& 1-\gamma, \label{eq:MarkovLv21}
\end{eqnarray}
where \eqref{eq:MarkovLv21} follows from
Theorem~\ref{th:MarkovL}.
\end{proof}


\medskip
\subsection{Some Lemmas} \label{se:somelemmas}
In order to prove Theorem~\ref{th:MarkovL}, we have to first establish the results in this subsection.  Let $E_i$ be an events for all $i$.  In this paper, we will frequently use the following lemma and the fact that if $E_1$ implies $E_2$, then $\Pr\{E_1\} \leq \Pr\{E_2\}$.

\medskip
\begin{lemma} \label{lm:union}
If $\Pr\{E_i\} \geq 1 - \delta_i$, then
\begin{eqnarray}
\Pr\{\cap_i E_i\} \geq 1 - \sum_i \delta_i.
\end{eqnarray}
\end{lemma}
\medskip
\begin{proof} By the union bound,
\begin{eqnarray*}
\Pr\{\cap_i E_i\} = 1 - \Pr\{\cup_i E_i^c\} \geq 1 - \sum_i
\Pr\{E_i^c\} \geq 1 - \sum_i \delta_i.
\end{eqnarray*}
\end{proof}

In the following lemma, we consider the variational distance (see
e.g., \cite{bk:Raymond}) between $Q_{XYZ}$ and $P_{XYZ}$ which
is defined as
\begin{eqnarray}
V(Q_{XYZ}, P_{XYZ}) = \sum_{xyz} |q(xyz) - p(xyz)|.
\end{eqnarray}

\medskip

\begin{lemma}\label{lm:strongpart_pxyz}
Assume $X - Y -  Z$. If for any $\epsilon > 0$ and any given $(\by,
\bz) \in U^n_{[YZ]\eta}$, $\bX$ is drawn $\sim \prod_{i}
p(x_i|y_i)$, then
\begin{eqnarray}
\Pr\left\{V(Q_{XYZ}, P_{XYZ}) \leq \epsilon \right\} \geq
1 - \epsilon
\end{eqnarray}
for $n$ sufficiently large and $\eta$ sufficiently small.
\end{lemma}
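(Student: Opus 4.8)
The plan is to control $V(Q_{XYZ},P_{XYZ})$ by splitting it across the (finite or countably infinite) alphabet $\calX\times\calY\times\calZ$ into two pieces: a ``head'' consisting of finitely many triples carrying most of the probability mass, and a ``tail'' whose total $P_{XYZ}$-mass is at most some small $\delta$ (possible since $P_{XYZ}$ is a probability distribution). On the tail, I bound the contribution to $V$ by $\sum_{\text{tail}} q(xyz) + \sum_{\text{tail}} p(xyz)$; the second sum is $\le\delta$ by construction, and the first sum I must also show is small with high probability. On the head, since there are only finitely many triples, it suffices to show each $q(xyz)$ is close to $p(xyz)$ with high probability. Throughout I would use Lemma~\ref{lm:union} to combine the finitely many ``with high probability'' statements into one.

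The workhorse is a conditional second-moment / Chebyshev argument using the hypotheses $H(P_{XYZ})<\infty$ and \eqref{eq:bounded}. Fix $(\by,\bz)\in U^n_{[YZ]\eta}$. For a fixed triple $(x,y,z)$, write $N(x,y,z;\bX,\by,\bz)=\sum_{i:\,(y_i,z_i)=(y,z)}\b1\{X_i=x\}$; since $\bX$ is drawn $\sim\prod_i p(x_i|y_i)$ and $X-Y-Z$, each indicator has mean $p(x|y)=p(x|yz)$, so $\bE[q(xyz)] = n^{-1}N(y,z;\by,\bz)\,p(x|y)$. Because $(\by,\bz)\in U^n_{[YZ]\eta}$, the empirical $Q_{YZ}'$ is close to $P_{YZ}$ (this is where I would invoke that unified typicality on $(\by,\bz)$ forces $V(Q'_{YZ},P_{YZ})$ small, e.g.\ via Pinsker applied to the $D$-term, or directly from the definition of $U^n_{[YZ]\eta}$), so $n^{-1}N(y,z;\by,\bz)\approx p(yz)$ and hence $\bE[q(xyz)]\approx p(xyz)$. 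The variance of $q(xyz)$ is $O(1/n)$, so Chebyshev gives $\Pr\{|q(xyz)-p(xyz)|>\epsilon'\}$ small for each fixed head triple once $n$ is large and $\eta$ small. For the tail mass $\sum_{\text{tail}}q(xyz)$, I would note $\bE\big[\sum_{\text{tail}}q(xyz)\big]=n^{-1}\sum_{(y,z)}N(y,z;\by,\bz)\,\Pr\{X_i\in\calX_{y}^{\text{tail}}\}$, which is $\le$ (tail mass of $P_{XYZ}$) $+$ (error from $Q'_{YZ}$ vs $P_{YZ}$) $\le 2\delta$, and then apply Markov's inequality to $\sum_{\text{tail}}q(xyz)\ge 0$ to get that it exceeds $\sqrt{2\delta}$ only with probability $\le\sqrt{2\delta}$.

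Assembling: choose $\delta$ so small that the tail bounds contribute at most $\epsilon/2$ to both $V$ and the failure probability; this fixes a finite head set $\calF$ of size $M$; then choose $\eta$ small and $n$ large so that for each of the $M$ head triples the Chebyshev bound gives failure probability $\le \epsilon/(4M)$ and deviation $\le \epsilon/(2M)$, and so that the $Q'_{YZ}$-vs-$P_{YZ}$ slack is negligible. On the intersection of these $M+1$ good events (probability $\ge 1-\epsilon$ by Lemma~\ref{lm:union}), we get $V(Q_{XYZ},P_{XYZ})\le \epsilon/2+\epsilon/2=\epsilon$, which is the claim.

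The main obstacle I anticipate is the tail: unlike the finite-alphabet case there is no uniform lower bound on the nonzero $p(xyz)$, so I cannot argue triple-by-triple over all of $\calX\times\calY\times\calZ$. The care needed is (a) choosing the head set $\calF$ using only the hypothesis that $P_{XYZ}$ is a summable distribution with $H(P_{XYZ})<\infty$, and (b) verifying that conditioning on $(\by,\bz)\in U^n_{[YZ]\eta}$ does not distort the conditional law of $\bX$ in a way that inflates the tail mass — this is exactly where $X-Y-Z$ (so $p(x|yz)=p(x|y)$) and the definition of $U^n_{[YZ]\eta}$ (forcing $Q'_{YZ}\approx P_{YZ}$, hence correct conditional frequencies) are both used. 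Assumption \eqref{eq:bounded} is what keeps the per-coordinate variances uniformly bounded so Chebyshev applies cleanly on the head.
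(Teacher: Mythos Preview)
Your approach is essentially the paper's: a finite ``head'' set $\calS\subset\calX\times\calY\times\calZ$ carrying most of the $P_{XYZ}$-mass, Chebyshev on each of the $M=|\calS|$ head triples (the variance of $N(x,y,z;\bX,\by,\bz)$ is at most $n$ since it is a sum of independent Bernoullis), Pinsker on $D(Q_{YZ}\|P_{YZ})\le\eta$ to force $Q'_{YZ}\approx P_{YZ}$, and then a tail bound. The only real difference is how you control the tail $q$-mass: you bound $\bE\big[\sum_{\text{tail}}q(xyz)\big]$ and apply Markov's inequality, whereas the paper notes that once $\sum_{\calS}|q(xyz)-p(xyz)|$ is small one gets $\sum_{\calS}q(xyz)\ge 1-\epsilon/2$ \emph{deterministically} on the good event, so the tail $q$-mass is automatically small without an extra probabilistic step---slightly cleaner, but your version is fine. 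One correction: neither $H(P_{XYZ})<\infty$ nor \eqref{eq:bounded} is actually used in this lemma; the variance bound is just the trivial $\mathrm{Var}[B_i]\le 1$ for Bernoulli $B_i$, and \eqref{eq:bounded} enters only later (in Lemma~\ref{lm:weakpart}).
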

%
\medskip
\begin{proof}
The proof is similar to the proof of \cite[Lemma~4.1]{Berger}
except that $P_{XYZ}$ is defined on countable alphabets here. Fix
any  $(x,y,z) \in \calX \times \calY \times \calZ$. For $1 \leq i \leq
n$, let $B_i$ be binary and independently distributed. If $(y,z) =
(y_i, z_i)$, let
\begin{eqnarray}
B_i = \left\{
        \begin{array}{ll}
          0 & \mbox{ with probability } 1 - p(x|y)\\
          1 & \mbox{ with probability } p(x|y).
        \end{array}
    \right. 
\end{eqnarray}
If $(y,z) \neq (y_i, z_i)$, let $B_i = 0$. Then $N(x,y,z;\bX,
\by, \bz)$ and $\sum_{i=1}^n B_i$ have the same distribution on
the set of integers.  So
\begin{eqnarray}
\bE[N(x,y,z;\bX, \by, \bz)] = \sum_{i=1}^n \bE[B_i]= p(x|y)
N(y,z;\by, \bz).
\end{eqnarray}
Since $B_i$ are binary and independent, the variance of
$N(x,y,z;\bX, \by, \bz)$ is
\begin{eqnarray}
\mbox{Var}[N(x,y,z;\bX, \by, \bz)] = \sum_{i=1}^n \mbox{Var}[B_i] \leq n.
\end{eqnarray}
For any $\delta > 0$, Chebyshev's inequality \cite[(3.32)]{bk:Cover}
can be applied to show
\begin{eqnarray}
{\Pr\{|N(x,y,z;\bX, \by, \bz) - p(x|y) N(y,z;\by, \bz)| \geq
n\delta\} } \nonumber\\
\leq \frac{\mbox{Var}[N(x,y,z;\bX, \by, \bz)]}{(n\delta)^2} \leq
\frac{1}{n \delta^2} \leq \delta, \ \ \ \ \ \ \ \ \ \ \ \
\label{eq:strp1}
\end{eqnarray}
where the last inequality holds for sufficiently large $n$.  Since
$q(xyz) = n^{-1}N(x,y,z;\bX, \by, \bz)$ and $q(yz) =
n^{-1}N(y,z;\by, \bz)$, \eqref{eq:strp1} is equivalent to
\begin{eqnarray}
\Pr\{|q(xyz) - p(x|y) q(yz)| \leq \delta\} \geq 1 - \delta.
\label{eq:lmS-1}
\end{eqnarray}

Now for any $\epsilon > 0$, let
\begin{eqnarray}
\eta = \frac{\epsilon^2}{32}. \label{eq:lmS3}
\end{eqnarray}
Since $(\by, \bz) \in U^n_{[YZ]\eta}$, $D(Q_{YZ} || P_{YZ}) \leq
\eta = \frac{\epsilon^2}{32}$. By Pinsker's inequality \cite{bk:Cover} and the fact that $\ln 2 < 1$,
\begin{eqnarray}
\frac{\epsilon}{4} &\geq& \sum_{yz} \left|q(yz) - p(yz)\right| \\
&=& \sum_{xyz} p(x|y)\left|q(yz) - p(yz)\right| \\
&=& \sum_{xyz} \left|p(x|y)q(yz) - p(xyz)\right|, \label{eq:strp2}
\end{eqnarray}
where \eqref{eq:strp2} follows from that $X - Y -  Z$.
Let $M = |\calS|$ where $\calS \subset \calX\times\calY\times\calZ$
is a finite subset such that
\begin{eqnarray}
\sum_{(x,y,z) \in \calS} p(xyz) \geq 1 - \frac{\epsilon}{8}.
\label{eq:minM}
\end{eqnarray}
Here, the left side of \eqref{eq:minM} goes to 1 as $M \rightarrow
\infty$, so that such $\calS$ must exist.  Let $E_{xyz} =
\b1\{|q(xyz) - p(x|y) q(yz)| \leq \frac{\epsilon}{8M}\}$ and suppose
$E_{xyz} = 1$ for all $(x,y,z) \in \calS$. Then
\begin{eqnarray}
\sum_{(x,y,z) \in \calS} |q(xyz) - p(x|y) q(yz)| \leq \frac{\epsilon}{8}.
\label{eq:lmS-0.5}
\end{eqnarray}
Together with \eqref{eq:strp2}, we have
\begin{eqnarray}
\sum_{(x,y,z) \in \calS} |q(xyz) - p(xyz)| \leq \frac{3\epsilon}{8}. \label{eq:lms4}
\end{eqnarray}
and hence,
\begin{eqnarray}
\sum_{(x,y,z) \in \calS} q(xyz) \geq \sum_{(x,y,z) \in \calS} p(xyz) -
\frac{3\epsilon}{8} \geq 1 - \frac{\epsilon}{2}, \label{eq:lmS0}
\end{eqnarray}
where the last inequality follows from \eqref{eq:minM}. Thus,
\begin{eqnarray}
\lefteqn{\sum_{xyz} |q(xyz) - p(xyz)|}\\
&\leq& \sum_{(x,y,z) \in \calS} |q(xyz) - p(xyz)| + \left(1 - \sum_{(x,y,z) \in \calS} q(xyz)\right)\nonumber\\
&&  +
\left(1 - \sum_{(x,y,z) \in \calS} p(xyz)\right)\\
&\leq& \frac{3\epsilon}{8} + \frac{\epsilon}{2} +
\frac{\epsilon}{8} \label{eq:lmS2}\\
&=& \epsilon,
\end{eqnarray}
where \eqref{eq:lmS2} follows from \eqref{eq:minM},
\eqref{eq:lms4} and \eqref{eq:lmS0}. Therefore, if $E_{xyz} = 1$
for all $(x,y,z) \in \calS$, then
$V(Q_{XYZ}, P_{XYZ}) \leq \epsilon.$
So we can put $\delta = \frac{\epsilon}{8M}$ into \eqref{eq:lmS-1}
and apply Lemma~\ref{lm:union} to show that when $n$ is
sufficiently large,
\begin{eqnarray}
\Pr \left\{V(Q_{XYZ}, P_{XYZ}) \leq \epsilon
\right\} &\geq& \Pr\{\cap_{(x,y,z) \in \calS} E_{xyz}\} \nonumber\\
&\geq& 1 - \frac{\epsilon}{8} \geq 1-\epsilon. 
\end{eqnarray}
\end{proof}
\medskip

We now establish a result regarding the Kullback-Leibler
divergence and entropy difference between $P_{X|YZ}$ and
$Q_{X|YZ}$.  In the following lemma, $(y^n, z^n)$ is not necessarily jointly typical.  Also, $Q_{X|Y=y, Z=z}$ and $P_{X|Y=y, Z=z}$
are the probability distributions of $X$ when $Y=y$ and $Z=z$ are given.  Recall that we consider only those $P_{XYZ}$ satisfying \eqref{eq:bounded} and $H(P_{XYZ}) < \infty$.

\medskip
\begin{lemma} \label{lm:weakpart}
Assume $X - Y -  Z$. If for any $\epsilon
> 0$ and any given $(y^n, z^n)$, $\bX$ is drawn $\sim \prod_{i} p(x_i|y_i)$, then
\begin{eqnarray}
\lefteqn{\Pr\left\{\bigg|\sum_{yz} q(yz) \left(D(Q_{X|Y=y, Z=z}||P_{X|Y=y, Z=z}) + \right.\right.} \nonumber\\
&& \left.\left.H(Q_{X|Y=y,Z=z}) - H(P_{X|Y=y,Z=z}) \right)\bigg|
\leq \epsilon \right\} \geq 1 - \epsilon \nonumber\\
\label{eq:weakpart}
\end{eqnarray}
for $n$ sufficiently large.
\end{lemma}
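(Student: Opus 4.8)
The plan is to express the quantity inside the absolute value as a sum over the index $i$ of a term depending only on $(y_i, z_i)$, recognize it as an average of i.i.d.\ (given $\by, \bz$) random variables, compute its mean, and bound its variance so that Chebyshev's inequality applies. Concretely, write
\begin{eqnarray*}
\sum_{yz} q(yz) \big( D(Q_{X|Y=y,Z=z}\|P_{X|Y=y,Z=z}) + H(Q_{X|Y=y,Z=z}) - H(P_{X|Y=y,Z=z}) \big).
\end{eqnarray*}
Using $q(yz) = n^{-1} N(y,z;\by,\bz)$ and expanding the divergence and entropy sums over $x$, one checks (after cancellation of the $\log q(x|y,z)$ terms between $D$ and $H(Q_{X|\cdot})$) that this equals
\begin{eqnarray*}
\frac1n \sum_{i=1}^n \Big( -\sum_{x} q(x|y_i,z_i) \log p(x|y_i,z_i) \,-\, H(P_{X|Y=y_i,Z=z_i}) \Big),
\end{eqnarray*}
and since $X-Y-Z$ gives $p(x|y,z) = p(x|y)$, the inner expression is a function of $(y_i,z_i)$ times empirical counts; grouping back over distinct $(y,z)$ values, the whole thing equals $\frac1n \sum_{i=1}^n W_i$ where $W_i := -\log p(X_i \mid y_i) - H(P_{X|Y=y_i})$ conditioned on $\by,\bz$ (here I use that $H(P_{X|Y=y_i,Z=z_i}) = H(P_{X|Y=y_i})$ by the Markov condition). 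The $X_i$ are independent given $(\by,\bz)$, hence so are the $W_i$.

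Next I would verify that $\bE[W_i \mid \by,\bz] = 0$: indeed $\bE[-\log p(X_i\mid y_i)\mid y_i] = H(P_{X|Y=y_i})$ by definition of conditional entropy, so each $W_i$ is centered and the sample mean $\frac1n\sum_i W_i$ has mean zero. For the variance, $\mathrm{Var}[W_i \mid y_i] \le \bE[(\log p(X_i\mid y_i))^2 \mid y_i] = \sum_x p(x\mid y_i)(\log p(x\mid y_i))^2 < C$ by assumption \eqref{eq:bounded}; this is exactly where that hypothesis is used, and it is the reason the lemma is stated for $P_{XYZ}$ satisfying \eqref{eq:bounded}. Therefore $\mathrm{Var}\big[\frac1n\sum_i W_i \,\big|\, \by,\bz\big] \le C/n$, uniformly over all $(y^n,z^n)$, and Chebyshev's inequality gives
\begin{eqnarray*}
\Pr\Big\{\Big|\tfrac1n\textstyle\sum_i W_i\Big| > \epsilon \,\Big|\, \by,\bz\Big\} \le \frac{C}{n\epsilon^2} \le \epsilon
\end{eqnarray*}
for $n$ large enough (depending on $\epsilon$ but not on $(y^n,z^n)$), which is the claimed bound.

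The main obstacle I anticipate is the bookkeeping in the first step — the algebraic identity that collapses the double sum $\sum_{yz} q(yz)(D + H(Q) - H(P))$ into the clean i.i.d.\ average $\frac1n\sum_i W_i$. One must be careful that the $D(Q_{X|Y=y,Z=z}\|P_{X|Y=y,Z=z})$ term contributes $\sum_x q(x\mid y,z)\log\frac{q(x\mid y,z)}{p(x\mid y,z)}$ and the $H(Q_{X|Y=y,Z=z})$ term contributes $-\sum_x q(x\mid y,z)\log q(x\mid y,z)$, so that the $\log q$ pieces cancel and only $-\sum_x q(x\mid y,z)\log p(x\mid y,z)$ survives; combined with $-H(P_{X|Y=y,Z=z})$ and summed against $q(yz)$, re-expanding $q$ as an empirical count over $i$ yields the sum of $W_i$. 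A secondary subtlety is that several entropy/divergence quantities here are infinite sums over a countable alphabet, so one should note that $H(P_{XYZ})<\infty$ (hence $H(P_{X|Y})<\infty$ on the support) and \eqref{eq:bounded} guarantee all the relevant series converge absolutely, legitimizing the rearrangements.
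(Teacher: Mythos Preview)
Your proposal is correct and follows essentially the same route as the paper: the paper sets $A_i = \log p(X_i\mid y_i)$, bounds $\mathrm{Var}[\sum_i A_i]\le nC$ via \eqref{eq:bounded}, applies Chebyshev, and then verifies algebraically that $n^{-1}\bE[\sum_i A_i]-n^{-1}\sum_i A_i$ equals the quantity in \eqref{eq:weakpart}; your $W_i=-A_i-\bE[-A_i]$ is just the centered version of $-A_i$, and your cancellation of the $\log q$ terms and regrouping over $i$ is exactly the paper's chain \eqref{eq:weakp3}--\eqref{eq:weakp4} run in reverse.
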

\begin{proof}
For $1 \leq i \leq n$, let $A_i = \log p(X_i|y_i)$. Since $X_i$ are
independent, $A_i$ are also independent. Together with \eqref{eq:bounded}, the upper bound on the variance of $\sum_{i=1}^n A_i$ is given by
\begin{eqnarray}
\mbox{Var} \left[\sum_{i=1}^n A_i \right] = \sum_{i=1}^n \mbox{Var}[ A_i]
\leq \sum_{i=1}^n\bE[A_i^2]
\leq nC,
\end{eqnarray}
By Chebyshev's inequality,
\begin{eqnarray}
\Pr\left\{\left|\sum_{i=1}^n A_i - \bE \left[\sum_{i=1}^n A_i \right] \right| \geq n \epsilon\right\} &\leq& \frac{\mbox{Var} \left[\sum_{i=1}^n A_i \right]}{(n \epsilon)^2} \nonumber\\
&\leq& \frac{C}{n \epsilon^2} \leq \epsilon 
\end{eqnarray}
when $n$ is sufficiently large.  Then
\begin{eqnarray}
{\Pr\left\{\left|n^{-1}\sum_{i=1}^n A_i - n^{-1} \bE \left[\sum_{i=1}^n A_i \right] \right| \leq \epsilon\right\}}
\geq 1 - \epsilon, \label{eq:weakp5}
\end{eqnarray}
where the left sides of \eqref{eq:weakpart} and \eqref{eq:weakp5}
are equal because
\begin{eqnarray}
\lefteqn{ n^{-1}\bE \left[\sum_{i=1}^n A_i
\right] - n^{-1}\sum_{i=1}^n A_i } \\
&=& n^{-1} \sum_{i=1}^n\sum_{x} p(x|y_i) \log p(x|y_i) - n^{-1} \sum_{i=1}^n \log p(X_i|y_i) \nonumber\\
&=& n^{-1} \sum_{y} N(y; \by) \sum_{x}p(x|y) \log p(x|y) \nonumber\\
&& - n^{-1} \sum_{xy} N(x,y;\bX, \by) \log p(x|y) \\
&=&\sum_{xy} p(x|y)q(y) \log p(x|y) - \sum_{xy} q(x,y) \log p(x|y) \\
&=& \sum_{xyz} (p(x|y)q(yz) - q(xyz)) \log p(x|y) \\
&=& \sum_{yz} q(yz) \sum_x (p(x|yz) - q(x|yz)) \log p(x|yz) \label{eq:weakp3}\\
&=& \sum_{yz} q(yz) \sum_x \bigg(q(x|yz)\log \frac{q(x|yz)}{p(x|yz)} - \nonumber\\
&&q(x|yz) \log q(x|yz) + p(x|yz) \log p(x|yz) \bigg), \label{eq:weakp4}
\end{eqnarray}
where \eqref{eq:weakp3} follows from that $X - Y -  Z$.
\end{proof}

If $(\by, \bz) \in U^n_{[YZ]\eta}$, the following lemma simplifies
\eqref{eq:weakpart}.
\medskip
\begin{lemma} \label{lm:weakpart_pyz}
For any $\epsilon > 0$, there exists $\eta > 0$ such that if $(\by,
\bz) \in U^n_{[YZ]\eta}$, then
\begin{eqnarray}
\left|\sum_{yz} (q(yz) - p(yz)) H(P_{X|Y=y,Z=z}) \right| \leq \epsilon, \label{eq:weakpyz0}
\end{eqnarray}
where $\epsilon \rightarrow 0$ as $\eta \rightarrow 0$.
\end{lemma}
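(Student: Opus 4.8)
The plan is to exploit assumption \eqref{eq:bounded}, whose real role is to make the conditional entropies appearing on the left of \eqref{eq:weakpyz0} \emph{uniformly} bounded. Concretely, fix any $y$ and let $A$ have the distribution of $\log p(X|Y=y)$ for $X\sim P_{X|Y=y}$; since the variance of $A$ is nonnegative, $\big(H(P_{X|Y=y})\big)^2 = (\bE[A])^2 \le \bE[A^2] = \sum_x p(x|y)\big(\log p(x|y)\big)^2 < C$ by \eqref{eq:bounded}, so $H(P_{X|Y=y}) < \sqrt{C}$ for every $y$. Because $X-Y-Z$, this also gives $H(P_{X|Y=y,Z=z}) = H(P_{X|Y=y}) < \sqrt{C}$ for every $(y,z)$ with $p(yz)>0$.

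With this in hand the argument is short. First I would note that $(\by,\bz)\in U^n_{[YZ]\eta}$ forces $D(Q_{YZ}\|P_{YZ})\le\eta<\infty$, hence $q(yz)=0$ whenever $p(yz)=0$, so the sum in \eqref{eq:weakpyz0} ranges effectively over pairs with $p(yz)>0$, on which the uniform bound applies (and on which the conditional entropies are well defined). The triangle inequality then yields
\[ \left|\sum_{yz}\big(q(yz)-p(yz)\big)H(P_{X|Y=y,Z=z})\right| \;\le\; \sqrt{C}\sum_{yz}\big|q(yz)-p(yz)\big| \;=\; \sqrt{C}\,V(Q_{YZ},P_{YZ}). \]
Finally, Pinsker's inequality together with $\ln 2 < 1$ and $D(Q_{YZ}\|P_{YZ})\le\eta$ give $V(Q_{YZ},P_{YZ}) \le \sqrt{2\,\eta\ln 2} < \sqrt{2\eta}$, so the left side of \eqref{eq:weakpyz0} is at most $\sqrt{2C\eta}$. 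Given any $\epsilon>0$, taking $\eta=\epsilon^2/(2C)$ (or smaller) makes this $\le\epsilon$, and plainly $\sqrt{2C\eta}\to 0$ as $\eta\to 0$.

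The only genuine idea here is the uniform bound $H(P_{X|Y=y,Z=z}) < \sqrt{C}$; this is exactly what \eqref{eq:bounded}, combined with the Markov chain, is designed to supply. Without it, the discontinuity of entropy on countably infinite alphabets would make it delicate to control the tail of $\sum_{yz} q(yz)\,H(P_{X|Y=y,Z=z})$ as $\eta\to 0$, and one would be pushed into a more careful splitting argument over a finite subset of $\calY\times\calZ$ together with tail estimates for both $\sum p(yz)H(P_{X|Y=y,Z=z})$ and $\sum q(yz)H(P_{X|Y=y,Z=z})$. Everything after the uniform bound is a one-line application of Pinsker's inequality, so I do not expect any real obstacle.
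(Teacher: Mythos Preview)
Your proof is correct and follows essentially the same route as the paper: obtain a uniform bound on $H(P_{X|Y=y,Z=z})$ from \eqref{eq:bounded} and the Markov chain $X-Y-Z$, then apply the triangle inequality and Pinsker's inequality to $V(Q_{YZ},P_{YZ})$. The only difference is cosmetic: you get the uniform bound $\sqrt{C}$ via $(\bE A)^2\le \bE[A^2]$, whereas the paper derives the bound $0.5+C$ by splitting the sum over $p(x|y)>0.5$ and $p(x|y)\le 0.5$; your derivation is arguably cleaner, and your explicit remark that $D(Q_{YZ}\|P_{YZ})<\infty$ forces $q(yz)=0$ whenever $p(yz)=0$ is a nice touch the paper leaves implicit.
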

\medskip
\begin{proof}
Since
\begin{eqnarray*}
{\sum_x p(x|y) \left( \log p(x|y) \right)^2} \ \ \geq \ \sum_{x : p(x|y) > 0.5} p(x|y) \left( \log p(x|y) \right)^2 \nonumber\\
\  - \sum_{x : p(x|y) \leq 0.5} p(x|y) \left( \log p(x|y) \right),
\end{eqnarray*}
it is easily shown that $H(P_{X|Y=y}) \leq 0.5 + C$ from
\eqref{eq:bounded}. Since $p(x|yz) = p(x|y)$ for all $(x,y,z)$ as $X
- Y -  Z$,
\begin{eqnarray}
\lefteqn{\left|\sum_{yz} (q(yz) - p(yz)) H(P_{X|Y=y,Z=z}) \right|} \\
&=&\left|\sum_{yz} (q(yz) - p(yz)) H(P_{X|Y=y}) \right| \label{eq:weakpyz0.5}\\
&\leq& \sum_{yz: q(yz) \geq p(yz)} (q(yz) - p(yz)) (0.5 + C) + \nonumber\\
&& \sum_{yz: q(yz) < p(yz)} (p(yz) - q(yz)) (0.5 + C) \label{eq:weakpyz1}\\
&=& (0.5 + C)\sum_{yz} |p(yz) - q(yz)| ,  \\
&\leq& (0.5 + C)\sqrt{2 \eta \ln 2 } , \label{eq:weakpyz2}
\end{eqnarray}
where \eqref{eq:weakpyz2} follows from $(\by, \bz) \in
U^n_{[YZ]\eta}$ and Pinsker's inequality. By letting $\eta =
\frac{\epsilon^2}{(0.5+C)^2 2\ln2}$, the lemma is proved.
\end{proof}

Now we use Lemma~\ref{lm:weakpart_pyz} to simplify
\eqref{eq:weakpart} in the following lemma, which uses the
conditional Kullback-Leibler divergence
$D(Q_{X|YZ}||P_{X|YZ}|Q_{YZ})$ \cite{bk:TSHanKKobay}.
\medskip
\begin{lemma} \label{lm:weakpart_pxyz}
Assume $X - Y -  Z$. If for any $\epsilon > 0$ and any given $(\by,
\bz) \in U^n_{[YZ]\eta}$, $\bX$ is drawn $\sim \prod_{i}
p(x_i|y_i)$, then
\begin{eqnarray}
\lefteqn{\Pr\big\{|D(Q_{X|YZ}||P_{X|YZ}|Q_{YZ}) + } \nonumber\\
&& H(Q_{X|YZ}) - H(P_{X|YZ})| \leq \epsilon \big\} \geq 1 -
\epsilon
\end{eqnarray}
for $n$ sufficiently large and $\eta$ sufficiently small.
\end{lemma}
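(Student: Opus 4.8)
The plan is to combine Lemma~\ref{lm:weakpart} with Lemma~\ref{lm:weakpart_pyz}; the only real work is the bookkeeping of which distribution weights the conditional-entropy terms. First I would rewrite the random quantity in the statement using the definitions of the conditional divergence and of the conditional entropy taken under the empirical distribution:
\[
D(Q_{X|YZ}||P_{X|YZ}|Q_{YZ}) + H(Q_{X|YZ}) = \sum_{yz} q(yz)\bigl(D(Q_{X|Y=y,Z=z}||P_{X|Y=y,Z=z}) + H(Q_{X|Y=y,Z=z})\bigr),
\]
whereas $H(P_{X|YZ}) = \sum_{yz} p(yz)\,H(P_{X|Y=y,Z=z})$ by definition. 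Hence the quantity whose absolute value must be controlled equals $T + \sum_{yz}\bigl(q(yz)-p(yz)\bigr) H(P_{X|Y=y,Z=z})$, where $T$ denotes the random sum appearing inside the probability in Lemma~\ref{lm:weakpart}, i.e. $T = \sum_{yz} q(yz)\bigl(D(Q_{X|Y=y,Z=z}||P_{X|Y=y,Z=z}) + H(Q_{X|Y=y,Z=z}) - H(P_{X|Y=y,Z=z})\bigr)$.

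Next I would apply Lemma~\ref{lm:weakpart} with $\epsilon$ replaced by $\epsilon/2$, which yields $\Pr\{|T| \leq \epsilon/2\} \geq 1 - \epsilon/2$ for $n$ sufficiently large; note that this step imposes no typicality requirement on $(\by,\bz)$. Then I would apply Lemma~\ref{lm:weakpart_pyz} with $\epsilon/2$ in place of $\epsilon$ to obtain an $\eta > 0$ such that, whenever $(\by,\bz) \in U^n_{[YZ]\eta}$, the deterministic error term obeys $\bigl|\sum_{yz}(q(yz)-p(yz)) H(P_{X|Y=y,Z=z})\bigr| \leq \epsilon/2$. Combining the two by the triangle inequality: on the event $\{|T| \leq \epsilon/2\}$, which has probability at least $1 - \epsilon/2 \geq 1 - \epsilon$, the quantity in the statement is at most $\epsilon/2 + \epsilon/2 = \epsilon$, as required, for $n$ large and $\eta$ small.

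The statement is essentially immediate once Lemmas~\ref{lm:weakpart} and~\ref{lm:weakpart_pyz} are in place, so I do not expect a genuine obstacle. The one point to watch is the mismatch between the $q(yz)$-weighting of $H(P_{X|Y=y,Z=z})$ that falls out of Lemma~\ref{lm:weakpart} and the $p(yz)$-weighting in the definition of $H(P_{X|YZ})$; this is exactly the discrepancy that Lemma~\ref{lm:weakpart_pyz} repairs, and that lemma in turn relies on the uniform bound $H(P_{X|Y=y}) \leq 0.5 + C$ obtained from \eqref{eq:bounded} together with Pinsker's inequality applied to $(\by,\bz) \in U^n_{[YZ]\eta}$. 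One should also note that $H(P_{X|YZ}) < \infty$ under the standing assumption $H(P_{XYZ}) < \infty$, so every term above is well defined and finite.
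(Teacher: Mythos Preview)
Your proposal is correct and follows essentially the same route as the paper: decompose the target quantity as the random sum $T$ of Lemma~\ref{lm:weakpart} plus the deterministic weighting error $\sum_{yz}(q(yz)-p(yz))H(P_{X|Y=y,Z=z})$, bound each by $\epsilon/2$ via Lemma~\ref{lm:weakpart} and Lemma~\ref{lm:weakpart_pyz} respectively, and combine by the triangle inequality. The paper's proof is terser (it simply says ``adding \eqref{eq:weakpxyz1} and \eqref{eq:weakpxyz2}''), but the underlying decomposition and the use of the two lemmas are identical.
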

\medskip
\begin{proof}
For any $\epsilon > 0$, there exists a sufficiently small
$\eta$ such that
\begin{eqnarray}
\left|\sum_{yz} (q(yz) - p(yz)) H(P_{X|Y=y,Z=z}) \right| \leq \frac{\epsilon}{2} \label{eq:weakpxyz1}
\end{eqnarray}
from Lemma~\ref{lm:weakpart_pyz}.  Now, suppose
\begin{eqnarray}
\lefteqn{\bigg|\sum_{yz} q(yz) (D(Q_{X|Y=y, Z=z}||P_{X|Y=y, Z=z}) + }\nonumber\\
&& H(Q_{X|Y=y,Z=z}) - H(P_{X|Y=y,Z=z}) )\bigg|
\leq \frac{\epsilon}{2}.  \label{eq:weakpxyz2}
\end{eqnarray}
Adding \eqref{eq:weakpxyz1} and \eqref{eq:weakpxyz2} gives
\begin{eqnarray}
&&{|D(Q_{X|YZ}||P_{X|YZ}|Q_{YZ}) + H(Q_{X|YZ}) - H(P_{X|YZ})|} \nonumber\\
&& \ \ \ \ \ \leq \epsilon. \label{eq:weakpxyz3}
\end{eqnarray}
When $n$ is sufficiently large, the probability that
\eqref{eq:weakpxyz2} is satisfied is larger than $1 -
\frac{\epsilon}{2}
> 1 - \epsilon$ from
Lemma~\ref{lm:weakpart}. Therefore, the lemma is proved.
\end{proof}
\medskip

%
Before we process to apply the established lemmas, we pause to
check that conditional entropy similar to entropy is lower
semicontinuous. Let $P_{A_mB_m} = \{p_{A_mB_m}(ab)\}$ and
$P_{AB} = \{p_{AB}(ab)\}$. Assume $H(P_{A|B}) < \infty$.
\medskip
\begin{lemma} \label{lm:conditionalH}
If $\lim_{m \rightarrow \infty} V(P_{A_mB_m}, P_{AB}) = 0$, then
$\lim_{m \rightarrow \infty} H(P_{A_m|B_m}) \geq H(P_{A|B})$.
\end{lemma}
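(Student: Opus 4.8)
The plan is to reduce the claim about conditional entropy to the known lower semicontinuity of ordinary entropy, applied with care to the joint and marginal distributions. Write $H(P_{A|B}) = H(P_{AB}) - H(P_B)$ and similarly $H(P_{A_m|B_m}) = H(P_{A_mB_m}) - H(P_{B_m})$. Since $V(P_{A_mB_m},P_{AB}) \to 0$ implies $V(P_{B_m},P_B)\to 0$ as well, I would first record that entropy is lower semicontinuous under convergence in variational distance: $\liminf_{m} H(P_{A_mB_m}) \geq H(P_{AB})$ and likewise $\liminf_m H(P_{B_m}) \geq H(P_B)$. The difficulty is that the naive manipulation $\liminf_m (H(P_{A_mB_m}) - H(P_{B_m})) \geq \liminf_m H(P_{A_mB_m}) - \limsup_m H(P_{B_m})$ goes the wrong way, because I would need an \emph{upper} bound on $\limsup_m H(P_{B_m})$, not just a lower bound, and entropy is in general not upper semicontinuous on countable alphabets.

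To get around this, I would exploit that $H(P_{AB}) < \infty$ (hence $H(P_B) < \infty$ and $H(P_{A|B}) < \infty$ by assumption), so $P_B$ is a fixed distribution of finite entropy. For such a distribution one does have continuity of entropy along sequences converging in variation to it, provided a mild uniform integrability / tail condition holds; but the cleanest route is to avoid splitting $H(P_{B_m})$ off at all. Instead I would work directly with the grouping identity
\begin{eqnarray}
H(P_{A_mB_m}) = H(P_{B_m}) + \sum_b p_{B_m}(b) H(P_{A_m|B_m = b}),
\end{eqnarray}
and apply lower semicontinuity of entropy to the joint law while handling the marginal term as a genuine limit rather than a $\limsup$. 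Concretely, fix any finite set $\calS$ of $b$-values carrying all but $\epsilon$ of the mass of $P_B$; on $\calS$ the marginals $p_{B_m}(b)$ converge to $p_B(b)$, so the contribution of $\calS$ to $H(P_{B_m})$ converges to the corresponding contribution to $H(P_B)$, and the tail contribution is controlled because $x\mapsto -x\log x$ is bounded and the tail mass is small for large $m$. This yields $\lim_m H(P_{B_m}) = H(P_B)$ (full limit, not just liminf), using finiteness of $H(P_B)$.

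With $\lim_m H(P_{B_m}) = H(P_B)$ in hand, the argument closes:
\begin{eqnarray}
\liminf_m H(P_{A_m|B_m}) &=& \liminf_m \big( H(P_{A_mB_m}) - H(P_{B_m}) \big) \nonumber\\
&=& \liminf_m H(P_{A_mB_m}) - \lim_m H(P_{B_m}) \nonumber\\
&\geq& H(P_{AB}) - H(P_B) = H(P_{A|B}),
\end{eqnarray}
where the inequality is lower semicontinuity of ordinary entropy applied to the joint distribution. The main obstacle, and the step I would spend the most care on, is exactly the claim $\lim_m H(P_{B_m}) = H(P_B)$: lower semicontinuity alone gives only $\liminf$, and one must use the finiteness of $H(P_B)$ together with a truncation argument on the finite set $\calS$ to upgrade this to a true limit. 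Everything else is bookkeeping with the chain rule for entropy and the elementary fact that variational convergence of a joint distribution forces variational convergence of each marginal.
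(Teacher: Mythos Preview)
Your chain-rule approach has a real gap at exactly the step you flagged: the claim $\lim_m H(P_{B_m}) = H(P_B)$ is \emph{false} in general, even when $H(P_B)<\infty$. Take $P_{AB}$ to be a point mass at $(a_0,b_0)$ and let $P_{A_mB_m}$ put mass $1-1/m$ on $(a_0,b_0)$ and spread the remaining $1/m$ uniformly over $(a_0,b)$ for $2^{m^2}$ distinct values of $b$. Then $V(P_{A_mB_m},P_{AB})=2/m\to 0$ and $H(P_B)=0$, yet $H(P_{B_m})\sim m\to\infty$. Your tail argument is precisely what breaks: the map $x\mapsto -x\log x$ is bounded and the tail \emph{mass} is small, but the tail may contain arbitrarily many symbols, so the \emph{sum} $\sum_{b\notin\calS}(-p_{B_m}(b)\log p_{B_m}(b))$ is not controlled by the tail mass alone. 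Finiteness of $H(P_B)$ constrains only the tail of $P_B$, not those of the approximants $P_{B_m}$. Consequently the subtraction $\liminf_m H(P_{A_mB_m}) - \lim_m H(P_{B_m})$ cannot be carried out.

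The paper sidesteps this by never splitting $H(P_{A_m|B_m})$ into joint minus marginal. It truncates the \emph{conditional} entropy directly: for any $\epsilon>0$ choose finite $L,M$ with $H(P_{A|B})\le \sum_{b=1}^{M} p_B(b)\,\tilde H(P_{A|B=b})+\epsilon$, where $\tilde H$ sums only over $a\le L$. Because each summand $-p\log p$ is nonnegative, one always has the lower bound $H(P_{A_m|B_m})\ge \sum_{b=1}^{M} p_{B_m}(b)\,\tilde H(P_{A_m|B_m=b})$, and this right-hand side depends continuously on finitely many probabilities $p_{A_mB_m}(a,b)$ and $p_{B_m}(b)$, hence converges to $\sum_{b=1}^{M} p_B(b)\,\tilde H(P_{A|B=b})\ge H(P_{A|B})-\epsilon$. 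The crucial difference from your plan is that only \emph{lower} bounds on entropies are ever used; no upper semicontinuity of any marginal entropy is needed.
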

\medskip
\begin{proof}
For any $\epsilon > 0$, there exists sufficient large $L$ and $M$
such that
\begin{eqnarray}
H(P_{A|B}) \leq \sum_{b=1}^M p_{B}(b)\tilde{H}(P_{A|B=b}) + \epsilon,
\label{eq:chHPappro}
\end{eqnarray}
where
$\tilde{H}(P_{A|B=b}) = -\sum_{a=1}^L p_{A|B}(a|b) \log
p_{A|B}(a|b).$
On the other hand,
\begin{eqnarray}
H(P_{A_m|B_m}) &\geq& \sum_{b=1}^M p_{B_m}(b) H(P_{A_m|B_m=b})\\
&\geq& \sum_{b=1}^M p_{B_m}(b) \tilde{H}(P_{A_m|B_m=b})
\label{eq:chHQappro},
\end{eqnarray}
where the right side of \eqref{eq:chHQappro} is a continuous
function in $\{p_{A_mB_m}(ab): 1 \leq a \leq L \mbox{ and } 1 \leq
b \leq M\}$. If $\lim_{m \rightarrow \infty} V(P_{A_mB_m},
P_{AB}) = 0$, $p_{A_mB_m}(ab) \rightarrow p(ab)$ for all $1 \leq
a \leq L$ and $1 \leq b \leq M$. Following \eqref{eq:chHQappro},
by replacing $p_{A_mB_m}$ by $p_{AB}$ and $P_{A_m|B_m=b}$
by $P_{A|B=b}$ on the right side, for any $\epsilon
> 0$,
\begin{eqnarray}
\lim_{m \rightarrow \infty} H(P_{A_m|B_m}) &\geq& \sum_{b=1}^M p_B(b) \tilde{H}(P_{A|B=b}) - \epsilon\\
&\geq& H(P_{A|B}) - 2 \epsilon, \label{eq:chHQappro2}
\end{eqnarray}
where \eqref{eq:chHQappro2} follows from \eqref{eq:chHPappro}.
Since $\epsilon > 0$ is arbitrary, the lemma is proved.
\end{proof}

By Lemma~\ref{lm:strongpart_pxyz} and
Lemma~\ref{lm:conditionalH}, we are capable to strengthen
Lemma~\ref{lm:weakpart_pxyz} and give the following lemma.
\medskip
\begin{lemma} \label{lm:MarkovL_prepare}
Assume $X - Y -  Z$. If for any $\epsilon > 0$ and any given $(\by,
\bz) \in U^n_{[YZ]\eta}$, $\bX$ is drawn $\sim \prod_{i}
p(x_i|y_i)$, then
\begin{eqnarray}
\Pr\big\{D(Q_{X|YZ}||P_{X|YZ}|Q_{YZ}) \leq \epsilon \big\} \geq 1 -
\epsilon, \label{eq:Markovp1}
\end{eqnarray}
and
\begin{eqnarray}
\Pr\big\{|H(Q_{X|YZ}) - H(P_{X|YZ})| \leq \epsilon \big\} \geq 1 -
\epsilon \label{eq:Markovp2}
\end{eqnarray}
for $n$ sufficiently large and $\eta$ sufficiently small.
\end{lemma}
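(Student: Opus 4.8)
The plan is to combine Lemma~\ref{lm:weakpart_pxyz}, which controls the \emph{sum} $D(Q_{X|YZ}||P_{X|YZ}|Q_{YZ}) + H(Q_{X|YZ}) - H(P_{X|YZ})$, with a one-sided control on the entropy difference obtained from the lower semicontinuity of conditional entropy (Lemma~\ref{lm:conditionalH}) together with the concentration of the empirical distribution (Lemma~\ref{lm:strongpart_pxyz}). Since $D(Q_{X|YZ}||P_{X|YZ}|Q_{YZ}) \geq 0$, the sum lying in $[-\epsilon', \epsilon']$ immediately yields the upper bound $H(Q_{X|YZ}) - H(P_{X|YZ}) \leq \epsilon'$; the missing ingredient is the matching lower bound $H(Q_{X|YZ}) - H(P_{X|YZ}) \geq -\epsilon'$, which is exactly where semicontinuity enters, and once both one-sided bounds are in hand, feeding them back into the bound from Lemma~\ref{lm:weakpart_pxyz} pins down $D(Q_{X|YZ}||P_{X|YZ}|Q_{YZ})$ as well.

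First I would upgrade Lemma~\ref{lm:conditionalH} from a sequential statement to a uniform one: for every $\epsilon' > 0$ there exists $\delta > 0$ such that every distribution $Q_{XYZ}$ on $\calX \times \calY \times \calZ$ with $V(Q_{XYZ}, P_{XYZ}) \leq \delta$ satisfies $H(Q_{X|YZ}) \geq H(P_{X|YZ}) - \epsilon'$. This is a routine argument by contradiction: if it failed for some $\epsilon'$, then for each $m$ one could choose a distribution $Q^{(m)}_{XYZ}$ with $V(Q^{(m)}_{XYZ}, P_{XYZ}) \leq 1/m$ but $H(Q^{(m)}_{X|YZ}) < H(P_{X|YZ}) - \epsilon'$, whence $\liminf_{m\to\infty} H(Q^{(m)}_{X|YZ}) \leq H(P_{X|YZ}) - \epsilon' < H(P_{X|YZ})$, contradicting Lemma~\ref{lm:conditionalH} applied with $A = X$ and $B = (Y,Z)$ (using $H(P_{X|YZ}) < \infty$). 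Since empirical distributions are a special case, this uniform statement in particular applies to $Q_{XYZ}$, whose support is finite so that $H(Q_{X|YZ})$ is always finite.

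Next I would fix the target $\epsilon > 0$, set $\epsilon' = \epsilon/2$, take $\delta$ as in the previous step, and assemble the probabilistic pieces. Applying Lemma~\ref{lm:strongpart_pxyz} with parameter $\min\{\delta, \epsilon/2\}$ gives, for $n$ sufficiently large and $\eta$ sufficiently small, $\Pr\{V(Q_{XYZ}, P_{XYZ}) \leq \delta\} \geq 1 - \epsilon/2$, and hence, by Step~1, $H(Q_{X|YZ}) \geq H(P_{X|YZ}) - \epsilon/2$ holds with probability at least $1 - \epsilon/2$. Applying Lemma~\ref{lm:weakpart_pxyz} with parameter $\epsilon/2$ gives, for $n$ large and $\eta$ small, $|D(Q_{X|YZ}||P_{X|YZ}|Q_{YZ}) + H(Q_{X|YZ}) - H(P_{X|YZ})| \leq \epsilon/2$ with probability at least $1 - \epsilon/2$. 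By Lemma~\ref{lm:union}, both events hold simultaneously on a set of probability at least $1 - \epsilon$.

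Finally, on that intersection: from $D(Q_{X|YZ}||P_{X|YZ}|Q_{YZ}) \geq 0$ and the bound from Lemma~\ref{lm:weakpart_pxyz} we get $H(Q_{X|YZ}) - H(P_{X|YZ}) \leq \epsilon/2$, which together with the semicontinuity lower bound $H(Q_{X|YZ}) - H(P_{X|YZ}) \geq -\epsilon/2$ yields $|H(Q_{X|YZ}) - H(P_{X|YZ})| \leq \epsilon/2 \leq \epsilon$, proving \eqref{eq:Markovp2}; and then $D(Q_{X|YZ}||P_{X|YZ}|Q_{YZ}) \leq \epsilon/2 - (H(Q_{X|YZ}) - H(P_{X|YZ})) \leq \epsilon/2 + \epsilon/2 = \epsilon$, proving \eqref{eq:Markovp1}. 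The only genuinely non-mechanical step is Step~1, the passage from the asymptotic form of lower semicontinuity to a uniform $\delta$--$\epsilon$ statement, but since that argument invokes Lemma~\ref{lm:conditionalH} only as a black box it poses no real difficulty, and everything else is bookkeeping with the union bound.
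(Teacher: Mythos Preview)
Your proposal is correct and follows essentially the same route as the paper: use Lemma~\ref{lm:weakpart_pxyz} together with $D \geq 0$ for the upper bound on $H(Q_{X|YZ}) - H(P_{X|YZ})$, use lower semicontinuity of conditional entropy (Lemma~\ref{lm:conditionalH}) via Lemma~\ref{lm:strongpart_pxyz} for the lower bound, combine with Lemma~\ref{lm:union}, and then feed back into Lemma~\ref{lm:weakpart_pxyz} to bound the divergence. Your Step~1, making the $\delta$--$\epsilon$ form of lower semicontinuity explicit by a contradiction argument, is a point the paper simply asserts (``there exists a sufficiently small $\delta$ from Lemma~\ref{lm:conditionalH} such that \ldots''), so your version is slightly more detailed but not different in substance.
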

\medskip
\begin{proof}
For any $\epsilon > 0$ and $P_{XYZ}$, there exists a sufficiently
small $\delta$ from Lemma~\ref{lm:conditionalH} such that if
\begin{eqnarray}
V(Q_{XYZ}, P_{XYZ}) \leq \delta, \label{eq:MarkovL_prepare1}
\end{eqnarray}
then $H(Q_{X|YZ}) - H(P_{X|YZ}) \geq - \epsilon$. On the other
hand, if \eqref{eq:weakpxyz3} is satisfied, then $\epsilon \geq
H(Q_{X|YZ}) - H(P_{X|YZ})$.
Therefore, if both \eqref{eq:weakpxyz3} and
\eqref{eq:MarkovL_prepare1} are satisfied, then $|H(Q_{X|YZ}) -
H(P_{X|YZ})| \leq \epsilon$. When $n$ is sufficiently large and
$\eta$ is sufficiently small, Lemma~\ref{lm:strongpart_pxyz} shows
that
\begin{eqnarray}
\Pr\left\{V\left(Q_{XYZ}, P_{XYZ}\right) \leq \min\left\{\delta, \frac{\epsilon}{2}\right\} \right\} \geq
1 - \frac{\epsilon}{2}.
\end{eqnarray}
Also, Lemma~\ref{lm:weakpart_pxyz} shows that
\eqref{eq:weakpxyz3} is true with probability larger than $1 -
\frac{\epsilon}{2}$. Therefore, \eqref{eq:Markovp2} can be shown
from Lemma~\ref{lm:union}. Similarly, \eqref{eq:Markovp1} can
be verified by Lemma~\ref{lm:union},
Lemma~\ref{lm:weakpart_pxyz} together with
\eqref{eq:Markovp2}.

\end{proof}

Due to the following theorem, we just need to bound two instead of
eight quantities in \eqref{eq:UJtypicality} in order to verify that
$(\bx, \by, \bz) \in U_{[XYZ]\gamma}^n$.

\medskip
\begin{theorem} \label{th:disconBset}
Assume $H(P_{AB})$ is finite. If $\lim_{m \rightarrow \infty}
V(P_{A_mB_m}, P_{AB}) = 0$ and $\lim_{m \rightarrow \infty}
|H(P_{A_mB_m}) - H(P_{AB})| = 0$, then

\begin{eqnarray}
\lim_{m \rightarrow \infty} |H(P_{A_m}) - H(P_{A})| = 0.
\end{eqnarray}
\end{theorem}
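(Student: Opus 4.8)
The plan is to reduce the statement to the chain rule for entropy together with the lower semicontinuity of conditional entropy established in Lemma~\ref{lm:conditionalH}. Observe first that $H(P_{AB}) < \infty$ implies $H(P_A) \le H(P_{AB}) < \infty$ and $H(P_{B|A}) \le H(P_{AB}) < \infty$, and that the hypothesis $|H(P_{A_mB_m}) - H(P_{AB})| \to 0$ forces $H(P_{A_mB_m}) < \infty$ for all large $m$, hence $H(P_{A_m}) < \infty$ and $H(P_{B_m|A_m}) < \infty$ there as well; so the chain rules $H(P_{A_mB_m}) = H(P_{A_m}) + H(P_{B_m|A_m})$ and $H(P_{AB}) = H(P_A) + H(P_{B|A})$ are legitimate.

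I would first dispose of the easy direction. Marginalizing the variational distance gives $V(P_{A_m}, P_A) \le V(P_{A_mB_m}, P_{AB}) \to 0$, so the lower semicontinuity of entropy (the special case of Lemma~\ref{lm:conditionalH} with a degenerate $B$) yields $\liminf_m H(P_{A_m}) \ge H(P_A)$. For the reverse direction I would apply Lemma~\ref{lm:conditionalH} with the roles of $A$ and $B$ interchanged --- which is allowed since $H(P_{B|A}) < \infty$ and $V(P_{B_mA_m}, P_{BA}) = V(P_{A_mB_m}, P_{AB}) \to 0$ --- to obtain $\liminf_m H(P_{B_m|A_m}) \ge H(P_{B|A})$. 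Feeding this into the chain rule gives the key estimate
\begin{eqnarray*}
\limsup_{m} H(P_{A_m}) &=& \limsup_{m}\bigl(H(P_{A_mB_m}) - H(P_{B_m|A_m})\bigr) \\
&\le& \lim_{m} H(P_{A_mB_m}) - \liminf_{m} H(P_{B_m|A_m}) \\
&\le& H(P_{AB}) - H(P_{B|A}) = H(P_A),
\end{eqnarray*}
where the middle inequality uses $\limsup(x_m - y_m) \le \limsup x_m - \liminf y_m$ (valid here since $\limsup_m H(P_{A_mB_m}) < \infty$). Combining the two bounds gives $\lim_m H(P_{A_m}) = H(P_A)$, i.e.\ $|H(P_{A_m}) - H(P_A)| \to 0$.

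The main obstacle is conceptual rather than computational: on a countably infinite alphabet entropy is only lower semicontinuous in variational distance, not continuous, so $V(P_{A_m}, P_A) \to 0$ alone cannot control $|H(P_{A_m}) - H(P_A)|$; the ``upper'' half of the argument must be obtained indirectly, trading the one-sided control on $H(P_{B_m|A_m})$ supplied by Lemma~\ref{lm:conditionalH} against the two-sided control on the joint entropy supplied by the hypothesis. The remaining care is bookkeeping --- verifying that every entropy term appearing in the chain rules is finite so the rearrangements are valid, and handling the $\limsup$ of a difference correctly.
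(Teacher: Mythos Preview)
Your proposal is correct and follows essentially the same approach as the paper: both arguments combine the chain rule $H(P_{A_m}) = H(P_{A_mB_m}) - H(P_{B_m|A_m})$, Lemma~\ref{lm:conditionalH} applied with the roles of $A$ and $B$ interchanged for the upper bound, and lower semicontinuity of entropy for the lower bound. Your write-up is in fact more careful than the paper's, which writes $\lim$ throughout where $\limsup$/$\liminf$ would be needed a priori, and does not explicitly verify the finiteness conditions that justify the chain-rule rearrangements.
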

\medskip
\begin{proof}
\begin{eqnarray}
\lim_{m \rightarrow \infty} H(P_{A_m}) &=& \lim_{m \rightarrow
\infty} H(P_{A_mB_m}) - H(P_{B_m | A_m}) \\
&=& H(P_{AB}) - \lim_{m \rightarrow
\infty} H(P_{B_m | A_m}) \\
&\leq& H(P_{AB}) - H(P_{B | A}) \label{eq:Hcon1}\\
&=& H(P_{A}),
\end{eqnarray}
where \eqref{eq:Hcon1} follows from
Lemma~\ref{lm:conditionalH}. On the other hand, $\lim_{m
\rightarrow \infty} H(P_{A_m}) \geq H(P_{A})$ because entropy is
lower semi-continuous \cite{topsoeBC}. Therefore, the theorem is
proved.
\end{proof}
\medskip
%
Suppose $|H(Q_{XYZ})-H(P_{XYZ})|$ and $D(Q_{XYZ}||P_{XYZ})$
are sufficiently small.  In this case, $V(Q_{XYZ}, P_{XYZ})$ is small
from Pinsker's inequality and Theorem~\ref{th:disconBset} tells that
all the nonnegative quantities in \eqref{eq:UJtypicality} are also
small.

\subsection{Proof of Theorem~\ref{th:MarkovL} \label{se:proof}}
We first show that for any $\epsilon > 0$,
\begin{eqnarray}
\Pr\big\{|H(Q_{XYZ}) - H(P_{XYZ})| \leq \epsilon \big\} \geq 1 -
\frac{\epsilon}{2}, \label{eq:Markovl2}
\end{eqnarray}
and
\begin{eqnarray}
\Pr\big\{D(Q_{XYZ}||P_{XYZ}) \leq \epsilon \big\} \geq 1 -
\frac{\epsilon}{2} \label{eq:Markovl1}
\end{eqnarray}
when $n$ is sufficiently large and $\eta$ is sufficiently
small.


Let $\eta = \frac{\epsilon}{2}$ so that $|H(Q_{YZ}) - H(P_{YZ})|
\leq \frac{\epsilon}{2}$ as $(\by, \bz) \in U^n_{[YZ]\eta}$. If
$|H(Q_{X|YZ}) - H(P_{X|YZ})| \leq \frac{\epsilon}{2}$, then
\begin{eqnarray}
\epsilon &\geq& |H(Q_{X|YZ}) - H(P_{X|YZ})| + |H(Q_{YZ}) - H(P_{YZ})| \nonumber\\
&\geq& |H(Q_{XYZ}) - H(P_{XYZ})|. \label{eq:mlH}
\end{eqnarray}
Together with Lemma~\ref{lm:MarkovL_prepare},
\eqref{eq:Markovl2} follows from
\begin{eqnarray}
\lefteqn{\Pr\left\{|H(Q_{XYZ}) - H(P_{XYZ})| \leq \epsilon
\right\}} \nonumber\\
&\geq&\Pr\left\{|H(Q_{X|YZ}) - H(P_{X|YZ})| \leq \frac{\epsilon}{2}
\right\} \geq 1 - \frac{\epsilon}{2}. \ \ \ \ \ \
\end{eqnarray}


Since $\eta = \frac{\epsilon}{2}$, $D(Q_{YZ} || P_{YZ}) \leq
\frac{\epsilon}{2}$ as $(\by, \bz) \in U^n_{[YZ]\eta}$.  If
$D(Q_{X|YZ}||P_{X|YZ} | Q_{YZ}) \leq \frac{\epsilon}{2}$, then
\begin{eqnarray}
\epsilon &\geq& D(Q_{X|YZ}||P_{X|YZ} | Q_{YZ}) + D(Q_{YZ} || P_{YZ})\\
&=& D(Q_{XYZ}||P_{XYZ}). \label{eq:mlD}
\end{eqnarray}
Together with Lemma~\ref{lm:MarkovL_prepare},
\eqref{eq:Markovl1} follows from
\begin{eqnarray}
\lefteqn{\Pr\left\{D(Q_{XYZ}||P_{XYZ}) \leq \epsilon
\right\}}\\
&\geq& \Pr\left\{D(Q_{X|YZ}||P_{X|YZ}|Q_{YZ}) \leq \frac{\epsilon}{2}
\right\} \geq 1 - \frac{\epsilon}{2}. \ \ \
\end{eqnarray}

For any $\gamma > 0$, there exists a sufficiently small $\epsilon
\leq \frac{\gamma}{8}$ from Theorem~\ref{th:disconBset} such
that if \eqref{eq:mlH} and \eqref{eq:mlD} are satisfied, then all the
absolute values in \eqref{eq:UJtypicality} are less than
$\frac{\gamma}{8}$, and hence, \eqref{eq:UJtypicality} is satisfied.
Therefore, by \eqref{eq:Markovl2} and \eqref{eq:Markovl1},
\begin{eqnarray}
\lefteqn{\Pr\big\{(\bX, \by, \bz) \in U^n_{[XYZ]\gamma} \big\}} \nonumber\\
&\geq& \Pr\big\{\{|H(Q_{XYZ}) - H(P_{XYZ})| \leq \epsilon\} \mbox{ and } \nonumber\\
&&\{D(Q_{XYZ}||P_{XYZ}) \leq \epsilon \}\big\} \\
&\geq& 1 - \epsilon \\
&\geq& 1 - \gamma.
\end{eqnarray}
{\hspace*{\fill}~\QED\par\endtrivlist\unskip}
%
%
%

\section{Conclusion}
A version of the Markov lemma which works on both finite or countably
infinite alphabets has been proved.  We have also demonstrated a
method to ease the verification of jointly unified typical sequences.
These results can readily generalize the achievability parts in some
existing coding theorems to countably infinite alphabet and they are
potentially useful for proving coding theorems that apply to both
finite and infinite alphabets.

\end{document}